\documentclass[a4paper,11pt]{article}
\usepackage{jheppub}
\usepackage{amsmath,amssymb,amsfonts}
\usepackage{orcidlink}
\newtheorem{theorem}{Theorem}
\newtheorem{lemma}{Lemma}

\newtheorem{corollary}{Corollary}
\newtheorem{proof}{Proof}

\title{Highest--weight truncation, graded EFT structure, and renormalization of black hole Love numbers}

\author{Naman Kumar\,\orcidlink{0000-0001-8593-1282}}
\affiliation{Department of Physics, Indian Institute of Technology Gandhinagar, Palaj, Gujarat, India, 382355}
\emailAdd{namankumar5954@gmail.com,naman.kumar@iitgn.ac.in}
\date{\today}

\abstract{
The static tidal Love numbers of four--dimensional black holes vanish identically, unlike their nontrivial dynamical response at finite frequency. Recent work has provided three complementary descriptions of this phenomenon: an emergent $\mathrm{SL}(2,\mathbb{R})$ organization of static near--zone perturbations, a graded logarithmic and multi--zeta structure in Shell Effective Field Theory (Shell EFT), and an on--shell matching framework based on gravitational Raman scattering with renormalization group (RG) running. We show that these features arise from a common near--zone truncation mechanism. For a massless scalar field, horizon regularity selects a unique static solution forming a highest--weight--type representation, truncating the hypergeometric solution to a finite polynomial and eliminating the independent decaying branch at large radius. This excludes a static Wilson coefficient in the effective theory. We show that the same truncation operates in the static Regge--Wheeler and Zerilli equations for four--dimensional Schwarzschild black holes. Analytic continuation of the horizon--regular solution to small frequency---via the Coulomb--hypergeometric or Mano--Suzuki--Takasugi formalisms---preserves this truncation as an anchoring condition for the renormalized angular momentum parameter. The resulting low--frequency expansion is controlled by Gamma and hypergeometric functions, generating a graded algebra of logarithms and odd Riemann zeta values. Within this structure, no invariant of negative weight exists in the static sector, so the vanishing of the static Love number follows as a structural consequence. This explains the ``zero--sum'' rule of Shell EFT and why the self--induced RG flow in gravitational Raman scattering does not generate a static invariant.
}

\makeatletter
\begin{document}
\maketitle

\section{Introduction}

The tidal response of compact objects to external gravitational fields is
encoded in their Love numbers, which quantify the induced multipole moments
generated by a slowly varying tidal environment.
In both Newtonian gravity and relativistic stellar models, Love numbers are
nonzero and sensitive to the internal structure and equation of state of the
object.
They therefore play an important role in gravitational--wave physics, where
tidal effects enter the inspiral waveform of compact binaries and provide a
potential observational probe of matter at supranuclear densities.

In four--dimensional general relativity, black holes behave in a strikingly
different manner.
For asymptotically flat black holes, the static tidal Love numbers vanish
identically, both for Schwarzschild and Kerr spacetimes~\cite{Landry:2015zfa,Chia:2020yla,LeTiec:2020bos,LeTiec:2020spy,Hui:2021vcv,BenAchour:2022uqo,Charalambous:2021mea,Perry:2023wmm,Riva:2023rcm,Combaluzier-Szteinsznaider:2024sgb,Gounis:2024hcm,Kehagias:2024rtz,Bhatt:2023zsy}.
This result has been established through direct perturbative calculations and
holds independently of spin.
Moreover, the robustness of the vanishing has been tested beyond linear response. 
For scalar perturbations with self-interactions, it was shown that while generic 
nonlinear interactions can generate tails, the tidal Love numbers vanish to all 
orders for nonlinear sigma models, highlighting a structural protection mechanism 
analogous to that in General Relativity~\cite{DeLuca:2023mio}.
This strengthens the conclusion that the absence of static response is not an artifact of linearization but a structural property of four-dimensional
black holes in general relativity.

At the same time, black holes do exhibit a nontrivial \emph{dynamical} tidal
response: at finite frequency the response coefficients are nonzero~\cite{Perry:2023wmm,Saketh:2023bul,Combaluzier--Szteinsznaider:2025eoc,Kosmopoulos:2025rfj,Charalambous:2021mea,Chia:2020yla,Perry:2024vwz,Chakraborty:2025wvs,Bhatt:2024yyz}
and display a rich frequency dependence.
In particular, the dynamical Love numbers exhibit a characteristic running
behavior whose coefficient can be computed explicitly and contributes to the
binary black-hole gravitational waveform at high post-Newtonian order
(8PN)~\cite{Chakraborty:2025wvs}.
The coexistence of vanishing static response with nontrivial dynamical behavior
poses a conceptual puzzle.

From an effective field theory (EFT) perspective, the vanishing of static Love
numbers is particularly surprising.
In the worldline EFT description of compact objects~\cite{Goldberger:2004jt,Porto:2016pyg,Porto:2005ac,Ivanov:2025ozg,Kol:2007rx}, tidal response is encoded
in local operators constructed from the curvature along the worldline. It has been shown that static Love numbers vanish identically for Schwarzschild and Kerr from the EFT point of view~\cite{Ivanov:2022qqt,Ivanov:2022hlo,Ivanov:2024sds,Ivanov:2026Raman,Kobayashi:2025vgl,Charalambous:2025ekl,Charalambous:2022rre}, which unambiguously confirms the vanishing of static Love numbers derived using perturbation theory.

For a generic object, there is no symmetry of the Einstein--Hilbert action that
forbids the leading static tidal operators, and their Wilson coefficients are
expected to be nonzero.
The exact vanishing of these coefficients for black holes therefore appears
mysterious and has motivated a search for an underlying principle enforcing
this result. In fact, it seems to be a fine-tuning problem in the same spirit as the cosmological constant problem~\cite{Porto:2016zng}.

Three recent developments have significantly sharpened this puzzle.
First, it has been shown that the static near--zone perturbation equations of
four--dimensional black holes admit an emergent
$\mathrm{SL}(2,\mathbb{R})$ algebraic structure~\cite{Charalambous:2022rre}.
Although this symmetry is not a spacetime isometry, it organizes the static
solution space in a representation--theoretic manner.
Regularity at the future event horizon selects solutions that furnish
highest--weight representations of the algebra, providing a symmetry--based
explanation for the absence of an independent static response mode.
More generally, it has been argued that the static limit enhances hidden
symmetries of black-hole spacetimes and that the vanishing of static Love
numbers may be understood as a consequence of symmetry enhancement beyond
the strict near-zone approximation~\cite{DeLuca:2025zqr}.
Interestingly, analogous ladder structures and vanishing Love numbers can
also arise in alternative black-hole models, including higher-dimensional
setups and analogue gravity systems~\cite{DeLuca:2024nih},
suggesting that symmetry considerations may play a broader organizing role.

Second, Shell Effective Field Theory (Shell EFT) has provided a regulated
matching framework for extracting Love numbers from exact perturbative
solutions~\cite{Kosmopoulos:2025rfj}.
For black holes, the resulting dynamical Love coefficients organize into a
graded algebra built from logarithms and Riemann zeta values, while the static
coefficient vanishes through a characteristic “zero--sum” structure.

Third, gravitational Raman scattering has been developed in a series of works~\cite{Ivanov:2024sds,Correia:2024jgr,Caron-Huot:2025tlq,Ivanov:2026Raman}, establishing, at the level of on--shell amplitudes and renormalization group flow, that matching to Schwarzschild phase shifts requires the absence of a static Wilson coefficient, with the dynamical response exhibiting a distinctive self--induced running structure.

Previous works identified these structures independently.
What has not been shown is that the graded multi–zeta algebra of Shell EFT
and the Raman RG structure can arise from
highest–weight truncation in the near zone.
The purpose of this work is to establish this structural connection. We demonstrate explicitly, using a massless scalar field as a controlled
setting, that the highest–weight condition enforced by horizon regularity
acts as a selection rule excluding an independent static response parameter.
When translated into the EFT language, this selection rule directly implies
the absence of the corresponding static Wilson coefficient and explains
why matching to Schwarzschild phase shifts in Raman scattering requires
$\lambda_{\ell,0}=0$.
We then show that, once the exterior solution is expressed in the
Coulomb–hypergeometric basis employed in Shell EFT, analytic continuation of
the highest–weight static solution to finite frequency forces
the remaining response coefficients to organize into a graded algebra of logarithms and
zeta values.
In this sense, neither the Shell EFT graded structure nor the Raman
renormalization group constraints are
independent or accidental features of
matching.
Rather, they are the effective–theory imprints of the emergent near–zone
highest–weight symmetry.
By making this connection explicit, we provide a unified explanation on
the vanishing of static black–hole Love numbers and the highly constrained
structure of their dynamical tidal response.

\section{Static Near--Zone Dynamics and $\mathrm{SL}(2,\mathbb{R})$}

We consider a massless scalar field $\Phi$ propagating on a four--dimensional
Schwarzschild background of mass $M$.
Upon separation of variables,
$\Phi(t,r,\Omega)=\sum_{\ell m}\Phi_\ell(r)Y_{\ell m}(\Omega)$,
the radial mode $\Phi_\ell(r)$ obeys a second--order ordinary differential
equation whose precise form depends on the choice of radial coordinate.
In the \emph{static limit} $\omega=0$, this equation simplifies substantially
and admits a reformulation in algebraic terms.

As shown in~\cite{Charalambous:2022rre}, the static near--zone radial operator can be written
as the quadratic Casimir of an $\mathfrak{sl}(2,\mathbb{R})$ algebra generated
by first--order differential operators acting locally on the radial coordinate.
Denoting the generators by $(L_{-1},L_0,L_{+1})$, they satisfy the standard
commutation relations
\begin{equation}
[L_0,L_{\pm1}]=\mp L_{\pm1}, \qquad
[L_{+1},L_{-1}]=2L_0 ,
\end{equation}
and the radial equation takes the form
\begin{equation}
\mathcal{C}_{\mathrm{SL}(2,\mathbb{R})}\,\Phi_\ell
=
\ell(\ell+1)\,\Phi_\ell ,
\end{equation}
where $\mathcal{C}_{\mathrm{SL}(2,\mathbb{R})}$ is the quadratic Casimir.
Importantly, this $\mathrm{SL}(2,\mathbb{R})$ structure is \emph{emergent}:
it is not a spacetime isometry of Schwarzschild, but rather an algebraic
property of the static near--zone perturbation equation.

Regularity at the future event horizon imposes a strong constraint on the
allowed solutions.
In representation--theoretic terms, the regular static solution must belong
to a highest--weight representation of $\mathfrak{sl}(2,\mathbb{R})$.
Concretely, for a given angular momentum $\ell$, the physical static solution
$\Phi_\ell(r)$ satisfies
\begin{equation}
(L_{+1})^{\ell+1}\,\Phi_\ell = 0 .
\label{eq:highest-weight}
\end{equation}
This condition expresses the fact that repeated action of the raising operator
terminates after a finite number of steps, so that the representation is
finite--dimensional.

Using the explicit differential realization of the generators, one finds that
the raising operator acts as a first--order radial derivative multiplied by a
smooth function of $r$.
As a result, the highest--weight condition \eqref{eq:highest-weight} implies
\begin{equation}
\frac{d^{\ell+1}}{dr^{\ell+1}}\Phi_\ell(r) = 0 ,
\end{equation}
up to multiplication by a nonvanishing function.
Hence the horizon--regular static solution is necessarily a polynomial of
degree at most $\ell$ in an appropriate near--zone radial variable.
This truncation is a direct consequence of the highest--weight property and
does not rely on any additional assumptions about the asymptotic behavior.

The physical implication of this result is immediate.
At large radius, a generic static solution admits two independent behaviors,
corresponding schematically to a growing ``tidal'' mode $\sim r^\ell$ and a
decaying ``response'' mode $\sim r^{-\ell-1}$.
The latter encodes the induced multipole moment and is the quantity measured
by the static Love number.
However, a polynomial solution of degree $\le\ell$ cannot accommodate an
independent decaying branch.
Once horizon regularity is imposed, the coefficient of the would--be response
mode is fixed and, in fact, vanishes.

We therefore conclude that in four--dimensional general relativity the static
scalar perturbation problem admits no free response parameter.
Equivalently, the static tidal Wilson coefficient in the effective description
must vanish.
In the following sections we show how this representation--theoretic selection
rule is reflected in the analytic structure of the dynamical response when the
solution is continued to small but nonzero frequency.

\section{EFT Interpretation: Absence of a Static Wilson Coefficient}

The long--wavelength tidal response of a compact object can be described in a
systematic and model--independent manner using a worldline effective field
theory.
In this framework, the object is replaced by a point particle whose internal
structure is encoded in an infinite tower of higher--dimensional operators
constructed from tidal tensors evaluated along the worldline.
For a nonspinning object, the leading electric--type operator at multipole order
$\ell$ takes the form
\begin{equation}
S_{\text{EFT}}
\supset
\frac12\,\lambda_{\ell,0}
\int d\tau\,
E_L E^L ,
\label{eq:EFT-static}
\end{equation}
where $E_L$ denotes the symmetric tracefree $\ell$--pole electric component of
the Weyl tensor and $\tau$ is proper time along the worldline.
The coefficient $\lambda_{\ell,0}$ has dimensions of length
$2\ell+1$ and is conventionally referred to as the static tidal Love number.

In the EFT language, $\lambda_{\ell,0}$ parametrizes the existence of an
\emph{independent static response degree of freedom}.
If nonzero, the operator \eqref{eq:EFT-static} produces an induced multipole
moment proportional to the applied static tidal field, leading to a
$r^{-\ell-1}$ contribution in the asymptotic solution of the full theory.
From the EFT point of view, there is no symmetry of the worldline action that
forbids the operator \eqref{eq:EFT-static} \emph{a priori}, and for a generic
compact object one expects $\lambda_{\ell,0}$ to be nonvanishing.

The analysis of the previous section shows that the situation is qualitatively
different for four--dimensional black holes.
The highest--weight truncation enforced by horizon regularity implies that the
static exterior solution admits no freely adjustable decaying mode.
Equivalently, once the tidal field is specified, the coefficient of the
$r^{-\ell-1}$ branch is completely fixed and, in fact, vanishes.
There is therefore no independent static response parameter in the full theory.

Matching the full theory to the EFT requires that the EFT reproduce the same
space of solutions in the overlap region.
Since the full theory contains no independent static response datum, the EFT
cannot contain a corresponding free Wilson coefficient.
Consistency of the matching therefore forces
\begin{equation}
\lambda_{\ell,0}=0 .
\label{eq:lambda-zero}
\end{equation}

Several points are worth emphasizing.
First, the vanishing of $\lambda_{\ell,0}$ is \emph{not} the result of a
fine--tuned cancellation between near-- and far--zone contributions, nor does
it depend on a particular renormalization scheme.
Rather, it reflects the absence of an allowed operator consistent with the
spectrum of physical solutions selected by horizon regularity.
Second, this conclusion does not rely on any special analytic properties of the
solution at nonzero frequency, nor on the detailed form of the effective action
beyond the existence of the operator \eqref{eq:EFT-static}.
It follows directly from the representation--theoretic structure of the static
near--zone dynamics.

Finally, we stress that the argument applies only to the strictly static sector.
Once time dependence is introduced, the highest--weight truncation no longer
holds and additional response operators involving time derivatives of the
tidal field are allowed.
In the EFT, this manifests itself as the appearance of higher--order operators,
schematically of the form $\dot E_L \dot E^L$, whose coefficients encode
dynamical Love numbers.
The absence of the static Wilson coefficient \eqref{eq:lambda-zero} is therefore
fully compatible with a rich and nontrivial dynamical tidal response, which we
analyze in subsequent sections.

\section{Coulomb--Hypergeometric Realization and Analytic Structure}

To probe the dynamical tidal response, we now allow for a small but nonzero
frequency $\omega$ and introduce the dimensionless parameter
\begin{equation}
\eta \equiv i\omega R_S ,
\end{equation}
where $R_S=2M$ is the Schwarzschild radius.
In this regime, the perturbation equation no longer admits the exact
$\mathrm{SL}(2,\mathbb{R})$ algebraic structure of the strictly static problem.
Nevertheless, the static highest--weight solution provides the natural anchor
point for a controlled analytic continuation in $\omega$.

In the Shell EFT framework~\cite{Kosmopoulos:2025rfj}, the exterior scalar solution is conveniently
represented in a Coulomb wavefunction basis.
This choice is motivated by the long--range nature of the Schwarzschild
potential, which induces a Coulomb--like phase shift in the asymptotic region.
Explicitly, the solution can be written as
\begin{equation}
\begin{split}
&\psi^+_\ell(r)
=
\frac{1}{z}
\sum_{n=-\infty}^{\infty}
b_n^\ell
\Big[
A_\ell F_{\nu_\ell+n}(z)
+ (-1)^n B_\ell F_{-\nu_\ell-n-1}(z)
\Big],\\&
z = \omega(r-R_S),
\label{eq:Coulomb-expansion}
\end{split}
\end{equation}
where the coefficients $b_n^\ell$ are determined by a three--term recurrence
relation and encode the effect of the curved background.
The quantities $A_\ell$ and $B_\ell$ specify the relative weight of the two
independent asymptotic solutions and are fixed by the physical boundary
condition at the horizon.
For black holes, this corresponds to imposing purely ingoing behavior at the
future horizon.

The Coulomb wavefunctions appearing in \eqref{eq:Coulomb-expansion} are given by
\begin{equation}
F_j(z)
=
\frac{\Gamma(j+1+\eta)}{\Gamma(2j+2)}
\,2^j z^{j+1} e^{-iz}
\,M(j+1+\eta,2j+2,2iz),
\label{eq:Coulomb-F}
\end{equation}
where $M(a,b,x)$ is the confluent hypergeometric function.
This representation makes the analytic dependence on $\eta$ manifest:
all frequency dependence enters through the combination
$j+1+\eta$ in the Gamma functions and through the argument of the hypergeometric
series.
In particular, the normalization and connection coefficients relating different
radial behaviors are controlled by ratios of Gamma functions evaluated at
integer arguments shifted by $\eta$.

A key quantity in \eqref{eq:Coulomb-expansion} is the renormalized angular
momentum parameter $\nu_\ell$, which is defined implicitly by the requirement
that the series solution be convergent.
For small $\eta$, $\nu_\ell$ admits a regular expansion around an integer
value.
The highest--weight condition of the static problem implies that, in the limit
$\eta\to0$, the physical solution must reduce to the truncating member of the
hypergeometric family.
Equivalently, $\nu_\ell$ approaches the integer value $\ell$ as
$\eta\rightarrow0$, ensuring that the static solution is polynomial and
horizon--regular.

This observation is crucial for the EFT interpretation.
It implies that the small--frequency behavior of the physical solution is
obtained by analytic continuation of a highest--weight state in $\eta$.
As a result, all nontrivial $\omega$--dependence of the response is encoded in
the analytic properties of the Gamma and hypergeometric functions appearing in
\eqref{eq:Coulomb-F}.
In the following section, we show that expanding these special functions around
$\eta=0$ naturally generates logarithms and zeta values with a graded
transcendental structure, which precisely matches the organization of
dynamical Love numbers observed in Shell EFT.

\section{Low--Frequency Expansion and Graded Zeta Algebra}

The final step in connecting the near--zone symmetry structure to the effective
description is to analyze the low--frequency expansion of the response
function.
In the Shell EFT framework, the response function $F_\ell(\omega)$ is obtained
from an exact matching relation that expresses it as a logarithmic derivative of
the exterior solution evaluated at the shell radius $r=R$.
Schematically, one may write
\begin{equation}
F_\ell(\omega)
\;\propto\;
\left.
\frac{\partial_r \psi^-_\ell - \partial_r \psi^+_\ell}{\psi_\ell}
\right|_{r=R},
\end{equation}
where $\psi^+_\ell$ denotes the exterior solution and $\psi^-_\ell$ the interior
one.
Crucially, the interior solution contributes only to scheme--dependent terms,
while all scheme--independent information relevant for Love numbers is governed
by the exterior black--hole solution.

After matching and renormalization, the point--particle limit is defined by
sending $R\to0$ while holding fixed the Schwarzschild radius $R_S$.
The resulting finite, scheme--independent response is conveniently encoded in
the renormalized function
\begin{equation}
\bar F_\ell(\omega)
=
\sum_{n\ge0}
(-i\omega)^n
\lambda_{\ell,n},
\label{eq:Fbar-expansion}
\end{equation}
where the coefficients $\lambda_{\ell,n}$ are the Wilson coefficients of the
worldline EFT and generalize the notion of Love numbers to finite frequency.
The static Love number corresponds to $\lambda_{\ell,0}$, while higher $n$
encode dynamical response.

As discussed in the previous section, all nontrivial frequency dependence of the
exterior solution enters through the dimensionless parameter
\[
\eta = i\omega R_S
\]
and through Gamma and hypergeometric functions with arguments shifted by
$\eta$.
The analytic structure of the low--frequency expansion is therefore controlled
by the expansion of these special functions around integer arguments.

A central role is played by the expansion of the logarithm of the Gamma
function,
\begin{equation}
\log\Gamma(1+u)
=
-\gamma\,u
+
\sum_{m\ge2}
\frac{(-1)^m}{m}\,
\zeta(m)\,
u^m ,
\label{eq:logGamma}
\end{equation}
where $\gamma$ is the Euler--Mascheroni constant and $\zeta(m)$ are Riemann
zeta values.

\medskip

\paragraph*{Universal odd--$\zeta$ structure.}

A refinement of Eq.~\eqref{eq:logGamma} relevant for Coulomb and
Mano--Suzuki--Takasugi representations is obtained by considering symmetric
Gamma ratios of the form
\begin{equation}
\mathcal{R}_\ell(\varepsilon)
\equiv
\log\frac{\Gamma(\ell+1+\varepsilon)}
{\Gamma(\ell+1-\varepsilon)} ,
\label{eq:Rell-def}
\end{equation}
where $\ell$ is an integer and $\varepsilon \sim \eta$ is small.

\paragraph*{Lemma.}
For integer $\ell \ge 0$, one has the all--orders expansion
\begin{equation}
\mathcal{R}_\ell(\varepsilon)
=
2(H_\ell-\gamma)\,\varepsilon
-
\sum_{k\ge1}
\frac{2}{2k+1}
\Big(
\zeta(2k+1)
-
H_\ell^{(2k+1)}
\Big)
\varepsilon^{2k+1}
\label{eq:odd-zeta-master}
\end{equation}
where
\[
H_\ell = \sum_{j=1}^{\ell}\frac1j,
\qquad
H_\ell^{(p)} = \sum_{j=1}^{\ell}\frac1{j^p}.
\]

\paragraph*{Proof.}
Expanding $\log\Gamma(n+\varepsilon)$ about integer
$n=\ell+1$,
\[
\log\Gamma(n+\varepsilon)
=
\log\Gamma(n)
+
\sum_{m\ge1}
\frac{\psi^{(m-1)}(n)}{m!}
\varepsilon^m,
\]
and subtracting the expansion with $\varepsilon\to-\varepsilon$
eliminates all even powers.
Using the identity valid at integer argument
\[
\psi^{(m)}(n)
=
(-1)^{m+1} m!
\big(
\zeta(m+1)
-
H_{n-1}^{(m+1)}
\big),
\]
yields Eq.~\eqref{eq:odd-zeta-master}.
\hfill$\square$

\medskip

Equation~\eqref{eq:odd-zeta-master} shows that only odd zeta values
$\zeta(3),\zeta(5),\zeta(7),\dots$
can arise from such Gamma prefactors, with universal coefficients
$-2/(2k+1)$ multiplying each $\zeta(2k+1)$.
In particular, at any fixed order in $\omega$, the coefficient of
$\zeta(2k+1)$ is completely determined by this universal factor and is
independent of normalization or matching scheme.
All rational contributions are governed by harmonic data
$H_\ell^{(p)}$.
This universality will play an important role below.

\medskip

In addition to zeta values, logarithmic terms arise from the Coulomb phase and
from the matching procedure.
At intermediate steps, these appear as $\log(\omega R)$ and
$\log(\omega R_S)$, reflecting infrared sensitivity to the long--range
potential and ultraviolet sensitivity to the regulator radius.
A nontrivial consistency requirement of the Shell EFT construction is that all
dependence on the arbitrary shell radius $R$ cancels in physical quantities.
After renormalization, the only surviving logarithmic dependence is through the
combination $\log(R_S/R)$.
It is therefore natural to treat this logarithm on the same footing as the zeta
values by defining
\begin{equation}
\zeta_1 \equiv -\frac12 \log\!\left(\frac{R_S}{R}\right).
\end{equation}

\medskip

\paragraph*{Grading and exclusion of the static invariant.}

Let $\mathfrak{A}$ denote the $\mathbb{Q}$--algebra generated by
$\zeta_1$ and $\zeta(m)$ for $m\ge2$.
We assign transcendental weight
\[
w(\zeta_1)=1,
\qquad
w(\zeta(m))=m,
\]
and extend $w$ additively to products.

Dimensional analysis implies that each power of $\eta$
brings one power of $\omega$ together with one power of $R_S$.
After factoring out the overall dimensional scale
$R_S^{2\ell+n+1}$ appropriate to the $\ell$--pole response,
the coefficient $\lambda_{\ell,n}$ multiplying $\omega^n$
can depend only on dimensionless elements of $\mathfrak{A}$.
The total transcendental weight of such elements is fixed to be $n-1$.

\paragraph*{Proposition.}
Under the grading defined above,
\begin{equation}
\lambda_{\ell,0}=0.
\end{equation}

\paragraph*{Proof.}
For $n=0$, the required weight would be $-1$.
However, the algebra $\mathfrak{A}$ contains no element of negative weight,
since it is generated by elements of strictly nonnegative weight.
Therefore no invariant contribution can appear at order $\omega^0$,
and $\lambda_{\ell,0}$ must vanish.
\hfill$\square$

\medskip

For $n\ge1$, the weight $n-1\ge0$ allows nontrivial invariants,
consistent with the existence of dynamical Love coefficients.
Taken together with the highest--weight truncation discussed earlier,
this analysis shows that the vanishing of static Love numbers is not an
accidental cancellation.
Rather, it is enforced by the interplay between near--zone
representation--theoretic constraints and the graded analytic structure
of the Coulomb--hypergeometric realization of the exterior solution.

As an explicit illustration of the general structure derived above,
consider the scalar case with $\ell=2$.
From Eq.~\eqref{eq:odd-zeta-master},
\begin{equation}
\log\frac{\Gamma(3+\eta)}{\Gamma(3-\eta)}
=
2\Big(\frac{3}{2}-\gamma\Big)\eta
-\frac{2}{3}\left(\zeta(3)-\frac{9}{8}\right)\eta^3
+\mathcal{O}(\eta^5).
\end{equation}

The transcendental contribution at cubic order is therefore
\begin{equation}
-\frac{2}{3}\zeta(3)\eta^3.
\end{equation}

Since $\eta=i\omega R_S$,
\[
\eta^3 = -i\,\omega^3 R_S^3,
\]
and the universal $\zeta(3)$ contribution to the matching prefactor is
\begin{equation}
\frac{2i}{3}\zeta(3)\,\omega^3 R_S^3.
\end{equation}

After taking the logarithmic derivative that defines the EFT response
function and restoring the overall dimensional factor
$R_S^{2\ell+1}$ appropriate to the $\ell$-pole,
one obtains the scheme-independent prediction
\begin{equation}
\lambda_{2,3}\Big|_{\zeta(3)}
=
\frac{2}{3}\zeta(3)\,R_S^{8}.
\end{equation}

This provides a concrete nontrivial check of the graded algebra:
the coefficient of $\zeta(3)$ is completely fixed by the universal
Gamma-function structure and is independent of normalization choices.

We now formalize the structural exclusion of a static invariant as a no-go statement following from graded analyticity.


\begin{theorem}[No-go for a static invariant from graded analyticity]
\label{thm:nogo_static}
Fix an integer multipole $\ell\ge 0$ and define $\eta:= i\omega R_S$.
Assume the (renormalized) black-hole response function admits a low-frequency
expansion
\begin{equation}
\overline{F}_\ell(\omega)=\sum_{n\ge 0}(-i\omega)^n\,\lambda_{\ell,n},
\qquad \lambda_{\ell,n}\in \mathbb{C},
\label{eq:Fbar_series}
\end{equation}
and that, after renormalization (i.e. after eliminating regulator-dependent terms),
$\overline{F}_\ell$ depends on $\omega$ only through the following data:

\begin{enumerate}
\item[(A1)] \textbf{Analytic continuation anchored at $\nu=\ell$:}
All $\omega$-dependence in the exterior solution can be written in terms of
Gamma/hypergeometric connection coefficients with integer arguments shifted by $\pm \eta$,
and a renormalized angular momentum parameter $\nu(\eta)$ satisfying
\begin{equation}
\nu(0)=\ell,
\qquad \nu(\eta)=\ell+\mathcal{O}(\eta^2).
\label{eq:nu_anchor}
\end{equation}
(Equivalently, the $\eta$-dependence of $\nu$ is even near $\eta=0$.)

\item[(A2)] \textbf{Transcendental content only from special-function expansions:}
After renormalization, the only non-algebraic constants appearing in the expansion
around $\eta=0$ are generated by:
\begin{equation}
\log\Gamma(\text{integer}\pm \eta),
\qquad \text{and}\qquad \log\!\Big(\frac{R_S}{R}\Big)
\label{eq:sources}
\end{equation}
(where $R$ is the matching/renormalization scale, e.g.\ a shell radius in Shell EFT).

\item[(A3)] \textbf{Weight assignment and dimensional homogeneity:}
Introduce the $\mathbb{Q}$-algebra
\begin{equation}
\mathcal{A}:=\mathbb{Q}\big[\zeta_1,\zeta(3),\zeta(5),\ldots\big],
\qquad 
\zeta_1:=-\frac12\log\!\Big(\frac{R_S}{R}\Big),
\label{eq:algebraA}
\end{equation}
graded by the \emph{transcendental weight}
\begin{equation}
w(\zeta_1)=1,\qquad w(\zeta(2k+1))=2k+1,
\qquad w(xy)=w(x)+w(y).
\label{eq:weightdef}
\end{equation}
Assume that, after factoring out the overall dimensionful scale $R_S^{2\ell+1+n}$,
the coefficient of $\omega^n$ is a homogeneous element of $\mathcal{A}$ of total weight $n-1$.
\end{enumerate}

Then the static Wilson coefficient is \emph{forbidden}:
\begin{equation}
\lambda_{\ell,0}=0.
\end{equation}
\end{theorem}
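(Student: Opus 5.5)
The plan is to reduce the theorem to the Proposition proved above: show that (A1)--(A3) force $\lambda_{\ell,0}$ into the weight $-1$ component of $\mathcal{A}$, and then show that this component is trivial. The argument has three steps.

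\textbf{Step 1 (grading is well defined).} I would first check that the weight in (A3) gives a genuine grading of $\mathcal{A}$, that is, a direct sum decomposition
\begin{equation*}
\mathcal{A}=\bigoplus_{k\ge0}\mathcal{A}_k .
\end{equation*}
The generators $\zeta_1,\zeta(3),\zeta(5),\ldots$ are treated as formal symbols, so $\mathcal{A}$ is a polynomial ring over $\mathbb{Q}$. A monomial $\zeta_1^{a_0}\prod_k\zeta(2k+1)^{a_k}$ then has weight $a_0+\sum_k(2k+1)a_k\ge0$. This is a nonnegative integer because every exponent is nonnegative and every generator has positive weight. The constants themselves make up $\mathcal{A}_0=\mathbb{Q}$. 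Hence $\mathcal{A}_{-1}=\{0\}$. Working formally also avoids any need for transcendence results about odd zeta values.

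\textbf{Step 2 (the coefficients land in $\mathcal{A}$ with the stated weight).} Next I would justify, from (A1) and (A2), that the renormalized coefficients really lie in $\mathcal{A}$ with weight $n-1$.

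The $\Gamma$-sources with integer arguments shifted by $\pm\eta$ are handled by the Lemma, Eq.~\eqref{eq:odd-zeta-master}, together with Eq.~\eqref{eq:logGamma}. In the symmetric combinations they contribute only the following, each paired with $\eta^{2k+1}$:
\begin{itemize}
\item odd zeta values $\zeta(2k+1)$, which have weight $2k+1$;
\item rational harmonic numbers $H^{(p)}_\ell$;
\item the combination $H_\ell-\gamma$.
\end{itemize}
The $\gamma$ pieces must cancel against, or be absorbed into, the Coulomb logarithms, leaving only $\zeta_1$. This is exactly the content of the hypothesis that, after renormalization, only $\log(R_S/R)$ survives.

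Condition (A1) is what rules out stray odd powers. Because $\nu(\eta)=\ell+\mathcal{O}(\eta^2)$, expanding the connection coefficients about the truncating value $\nu=\ell$ introduces no new transcendental sources, and no pole produces a weight shift below zero.

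\textbf{Step 3 (conclude).} Taking (A3) as the homogeneity statement, the dimensionless coefficient of $\omega^0$ is $\lambda_{\ell,0}/R_S^{2\ell+1}\in\mathcal{A}_{-1}=\{0\}$. Therefore $\lambda_{\ell,0}=0$, exactly as in the earlier Proposition.

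\textbf{Main obstacle.} The step I expect to be the real obstacle is Step 2, specifically the passage from the hypotheses to genuine homogeneity. Rational numbers have weight $0$, so without a careful convention a rational constant could sit at order $\omega^0$.

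I would handle this by reading (A3) literally: the statement assumes homogeneity of weight $n-1$. The logical proof then needs only Step 1 and Step 3, and I would present Step 2 as a consistency remark rather than a derivation. The proof is then short: nonnegativity of the grading on generators, closure of that nonnegativity under products, and the observation that a homogeneous element of negative degree in a nonnegatively graded algebra is zero.
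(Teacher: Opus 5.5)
Your proposal is correct and follows the paper's proof. In both, reading (A3) literally, the rescaled $\omega^0$ coefficient $\lambda_{\ell,0}/R_S^{2\ell+1}$ must be homogeneous of weight $-1$ in an algebra generated by positive-weight elements, and the weight $-1$ component of that algebra is $\{0\}$. Your treatment of $\zeta_1,\zeta(3),\zeta(5),\ldots$ as formal symbols, so that the grading is an honest direct-sum decomposition, is a small gain in rigor over the paper, which tacitly assumes the weight is well defined on the actual numbers. You are also right to keep Step 2 out of the logical chain.
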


\begin{proof}
By (A3), after factoring out $R_S^{2\ell+1+n}$ the coefficient multiplying $\omega^n$
must be a homogeneous element of $\mathcal{A}$ of total weight $n-1$.
For $n=0$ this required weight is $-1$.

However, $\mathcal{A}$ is generated by elements of strictly positive weight
($w(\zeta_1)=1$ and $w(\zeta(2k+1))\ge 3$), hence every nonzero homogeneous element of
$\mathcal{A}$ has weight $\ge 0$. Therefore no element of weight $-1$ exists in $\mathcal{A}$.

Consequently the homogeneous weight constraint cannot be satisfied at $n=0$ unless the
coefficient vanishes. Hence $\lambda_{\ell,0}=0$.
\end{proof}


\begin{lemma}[Odd-zeta no-go from symmetric Gamma ratios]
\label{lem:oddzeta}
For any integer $m\ge 1$ and integer $\ell\ge 0$, consider the symmetric ratio
\begin{equation}
\mathcal{R}_\ell(\eta):=\log\frac{\Gamma(\ell+1+\eta)}{\Gamma(\ell+1-\eta)}.
\end{equation}
Then its Taylor series about $\eta=0$ contains \emph{only odd powers} of $\eta$, and its
transcendental part contains \emph{only odd zeta-values} $\zeta(2k+1)$.
Moreover, the coefficient multiplying $\zeta(2k+1)\,\eta^{2k+1}$ is universal:
\begin{equation}
\big[\mathcal{R}_\ell(\eta)\big]_{\zeta(2k+1)}
=
-\frac{2}{2k+1}\,\zeta(2k+1)\,\eta^{2k+1},
\qquad k\ge 1,
\label{eq:univ_oddzeta}
\end{equation}
independent of $\ell$ (all $\ell$-dependence resides in rational/harmonic terms).
\end{lemma}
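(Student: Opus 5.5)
The plan is to prove the lemma directly from the Taylor expansion of $\log\Gamma$ about the positive integer $n=\ell+1$, essentially reusing the argument given for Eq.~\eqref{eq:odd-zeta-master}. There are three steps.

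\emph{Step 1: only odd powers.} Write
\[
\log\Gamma(\ell+1\pm\eta)=\log\Gamma(\ell+1)+\sum_{m\ge1}\frac{\psi^{(m-1)}(\ell+1)}{m!}(\pm\eta)^m ,
\]
which converges for $|\eta|<\ell+1$, the distance to the nearest pole. Subtracting the two expansions cancels the constant term and every even power of $\eta$. This leaves
\[
\mathcal{R}_\ell(\eta)=2\sum_{j\ge0}\frac{\psi^{(2j)}(\ell+1)}{(2j+1)!}\eta^{2j+1}.
\]
An equivalent and cleaner way to see this is to note that $\mathcal{R}_\ell(-\eta)=-\mathcal{R}_\ell(\eta)$, so $\mathcal{R}_\ell$ is odd in $\eta$.

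\emph{Step 2: evaluate the polygamma values.} I would insert the integer-argument identities. At first order, $\psi(\ell+1)=-\gamma+H_\ell$. For $m=2k\ge2$,
\[
\psi^{(2k)}(\ell+1)=(-1)^{2k+1}(2k)!\,\big(\zeta(2k+1)-H^{(2k+1)}_\ell\big).
\]
These follow from $\psi^{(m)}(x)=(-1)^{m+1}m!\sum_{j\ge0}(x+j)^{-m-1}$: setting $x=\ell+1$ and removing the first $\ell$ terms of the zeta sum gives the harmonic correction. Substituting, the coefficient of $\eta^{2k+1}$ becomes $-\tfrac{2}{2k+1}\big(\zeta(2k+1)-H^{(2k+1)}_\ell\big)$. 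This reproduces Eq.~\eqref{eq:odd-zeta-master}.

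\emph{Step 3: read off the transcendental content.} Each coefficient has the form $\mathbb{Q}$-multiple of $\zeta(2k+1)$ plus a rational number $H^{(2k+1)}_\ell\in\mathbb{Q}$. The only exception is the linear term, which carries $\gamma$.

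The main subtlety lies in how the statement should be read. The phrase ``transcendental part'' means that we project onto the $\zeta(2k+1)$ components, treating $\gamma$ as the weight-one constant at order $\eta^1$. The theorem's algebra absorbs $\gamma$ into $\zeta_1$ after renormalization, so $\gamma$ is not counted among the zeta values. Also, no even zeta values $\zeta(2k)$ can appear, because they would have to multiply even powers of $\eta$, and Step 1 eliminated those.

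Under this reading, the coefficient of $\zeta(2k+1)\eta^{2k+1}$ is exactly $-2/(2k+1)$ for $k\ge1$. This is manifestly independent of $\ell$, since $\ell$ enters only through the rationals $H_\ell$ and $H^{(2k+1)}_\ell$. One caveat is that ``universal'' presupposes that $\zeta(2k+1)$ is not rational. This is known only for $\zeta(3)$, so for higher $k$ the decomposition should be stated formally, as a coefficient extraction in the expansion. The integer $m$ in the statement plays no role and can be ignored.
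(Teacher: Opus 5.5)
Your proposal is correct and follows essentially the same route as the paper's proof: polygamma Taylor expansion about $\ell+1$, cancellation of the even powers on subtraction, then the integer-argument identity $\psi^{(2k)}(\ell+1)=-(2k)!\,\big(\zeta(2k+1)-H^{(2k+1)}_\ell\big)$ and reading off the coefficients. Your separate treatment of the linear term $2(H_\ell-\gamma)\eta$ is more careful than the paper's proof, which runs the identity down to $k=0$ and would formally produce the divergent $\zeta(1)$; your remark that the $\zeta(2k+1)$-coefficient extraction is formal for $k\ge2$ is likewise a valid refinement.
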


\begin{proof}
Expand $\log\Gamma(\ell+1\pm \eta)$ around the integer $\ell+1$ using polygamma functions:
\begin{equation}
\log\Gamma(\ell+1+\eta)
=
\log\Gamma(\ell+1)
+\sum_{m\ge 1}\frac{\psi^{(m-1)}(\ell+1)}{m!}\,\eta^m,
\end{equation}
and similarly with $\eta\to -\eta$. Subtracting gives
\begin{equation}
\mathcal{R}_\ell(\eta)
=
\sum_{m\ge 1}\frac{\psi^{(m-1)}(\ell+1)}{m!}\Big(\eta^m-(-\eta)^m\Big)
=
2\sum_{k\ge 0}\frac{\psi^{(2k)}(\ell+1)}{(2k+1)!}\,\eta^{2k+1},
\end{equation}
so only odd powers occur.

For integer argument $n=\ell+1$, the polygamma values satisfy
\begin{equation}
\psi^{(2k)}(n)=(-1)^{2k+1}(2k)!\Big(\zeta(2k+1)-H^{(2k+1)}_{\ell}\Big)
=-(2k)!\Big(\zeta(2k+1)-H^{(2k+1)}_{\ell}\Big),
\end{equation}
where $H^{(p)}_{\ell}=\sum_{j=1}^{\ell}j^{-p}$ is a generalized harmonic number.
Plugging into the previous line yields
\begin{equation}
\mathcal{R}_\ell(\eta)
=
-2\sum_{k\ge 0}\frac{1}{2k+1}\Big(\zeta(2k+1)-H^{(2k+1)}_{\ell}\Big)\eta^{2k+1}.
\end{equation}
The transcendental part arises only from $\zeta(2k+1)$ and is therefore purely odd-zeta.
Reading off the $\zeta(2k+1)$ coefficient gives \eqref{eq:univ_oddzeta}.
\end{proof}


\begin{corollary}[Graded no-go + parity constraints]
\label{cor:graded_parity}
Under the assumptions of Theorem~\ref{thm:nogo_static}, and in addition assuming that
all Gamma-function dependence enters through symmetric ratios of the form
$\Gamma(\text{integer}+\eta)/\Gamma(\text{integer}-\eta)$ and an even-in-$\eta$ parameter
$\nu(\eta)$ as in \eqref{eq:nu_anchor}, the following hold:
\begin{enumerate}
\item[(i)] $\lambda_{\ell,0}=0$ (no static invariant).
\item[(ii)] Any zeta-value appearing in $\lambda_{\ell,n}$ is necessarily an \emph{odd} zeta,
$\zeta(2k+1)$.
\item[(iii)] At fixed order $\omega^{2k+1}$, the coefficient of $\zeta(2k+1)$ is fixed up to
purely algebraic/harmonic additions by the universal factor in \eqref{eq:univ_oddzeta}.
\end{enumerate}
\end{corollary}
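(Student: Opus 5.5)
The plan is to treat the three claims separately, since each reduces to a result already established.

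For (i) we invoke Theorem~\ref{thm:nogo_static} directly. The corollary's hypotheses include (A1)--(A3), and the extra symmetric-ratio assumption only narrows the class of allowed Gamma dependence. The weight argument therefore goes through unchanged: the coefficient of $\omega^0$ would need weight $-1$ in $\mathcal{A}$, and no such element exists.

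For (ii) we track where zeta values can originate. By (A2), the only non-algebraic constants come from $\log\Gamma(\text{integer}\pm\eta)$ and from $\log(R_S/R)$. By the added hypothesis, every Gamma factor appears inside a symmetric ratio. Lemma~\ref{lem:oddzeta} then shows that each such logarithm expands into odd powers of $\eta$, with coefficients $\zeta(2k+1)$ plus rational harmonic numbers.

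Two routes by which even zetas could enter must then be excluded.
\begin{itemize}
\item[(a)] Exponentiating the ratios, and forming products and the logarithmic derivative that defines $\overline{F}_\ell$, only produces polynomials in the expansion coefficients. These lie in $\mathbb{Q}[\zeta(3),\zeta(5),\ldots]$ tensored with the rational harmonic data, so no $\zeta(2k)$ is generated. The $\gamma$ terms multiplying $\eta$ cancel between numerator and denominator of each ratio only up to $2(H_\ell-\gamma)\eta$. We will argue that these terms are either absorbed into the renormalized logarithm $\zeta_1$ (as the $\gamma$ and $\log$ combine in the Coulomb phase) or are excluded by (A3), since $\gamma\notin\mathcal{A}$.
\item[(b)] The dependence on $\nu(\eta)$ through integer-shifted arguments $\nu+j\pm\eta$. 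Since $\nu-\ell=\mathcal{O}(\eta^2)$ and is even, we will re-expand $\Gamma(\ell+j+\delta\nu\pm\eta)$ around the integer. This introduces polygammas at integer arguments multiplied by the even series $\delta\nu$, and these again combine into differences that are odd under $\eta\to-\eta$. This step is where even zetas are most likely to sneak in, because $\psi^{(1)}(n)$ contains $\zeta(2)$. The cleanest fix is to note that (A3) defines $\mathcal{A}$ without even zetas: any $\zeta(2k)$ would violate the homogeneity assumption, so by hypothesis such terms are absent after renormalization. Before appealing to (A3) we will try to verify this cancellation explicitly at the lowest order.
\end{itemize}

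For (iii), at order $\eta^{2k+1}$ the term linear in $\zeta(2k+1)$ can come from only two sources. The first is the direct term of Lemma~\ref{lem:oddzeta}. The second is products of lower-weight generators, for example $\zeta(3)\zeta_1^{2k-2}$, which are distinct monomials in the graded algebra and do not contribute to the coefficient of the single generator $\zeta(2k+1)$. Each symmetric ratio therefore contributes exactly $-\tfrac{2}{2k+1}\zeta(2k+1)\eta^{2k+1}$, independent of $\ell$.

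The final coefficient is then this universal factor multiplied by the rational multiplicity with which the ratios enter the logarithmic derivative, modulo harmonic and algebraic additions. This is the sense of ``fixed up to algebraic/harmonic additions'' in the statement. The main obstacle is part (ii)(b), controlling the $\nu$-shift; the rest is bookkeeping on top of Lemma~\ref{lem:oddzeta} and Theorem~\ref{thm:nogo_static}.
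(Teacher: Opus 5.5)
Your handling of (i) and (ii) is sound and follows the paper's route: Theorem~\ref{thm:nogo_static} for (i), then Lemma~\ref{lem:oddzeta} for (ii). In (ii)(b) you are right that even anchoring alone does not remove even zetas. If $\nu=\ell+\delta\nu$ sits inside a Gamma argument, the antisymmetrized expansion keeps the term $2\psi^{(1)}(n)\,\delta\nu\,\eta$, which contains $2\nu_2\,\zeta(2)\,\eta^3$. The paper's one-line appeal to even anchoring does not address this. Under the literal added hypothesis (Gamma arguments are integers shifted by $\pm\eta$) this configuration never arises. In any case (A3) excludes $\zeta(2k)$ because $\mathcal{A}$ contains none, so your fallback is legitimate.

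The genuine gap is in (iii). You identify the $\eta^{2k+1}$ term of $\mathcal{R}_\ell$ with the $\omega^{2k+1}$ coefficient of $\overline F_\ell$. This tacitly assumes the Gamma ratios enter $\overline F_\ell$ through an algebraic prefactor that is nonzero at $\eta=0$.

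Hypothesis (A3), which you invoke yourself in (ii)(b), requires the $\omega^{n}$ coefficient to be homogeneous of weight $n-1$. At $n=2k+1$ that weight is $2k$, so the weight-$(2k+1)$ monomial $\zeta(2k+1)$ cannot occur there at all. For $k\ge1$, weight-zero harmonic additions are excluded too. Your list of competitors exposes the mismatch: $\zeta(3)\zeta_1^{2k-2}$ has weight $2k+1$, so you are grading the $\omega^{2k+1}$ coefficient by $n$ rather than $n-1$.

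Consistency with (A3) forces your ``rational multiplicity'' to vanish. As written, the argument either proves (iii) vacuously or contradicts the hypotheses it relies on.

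To get a nontrivial statement, you must track the $\eta$-dependence of the algebraic prefactors produced by the connection coefficients and the logarithmic derivative. Suppose they carry an overall power of $\eta$, the same power that removes $\lambda_{\ell,0}$. Then the universal $-\tfrac{2}{2k+1}\zeta(2k+1)$ appears at order $\omega^{2k+2}$, multiplied by the rational $\mathcal{O}(\eta)$ coefficient of that prefactor, which is what (A3) allows.

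The paper's one-line proof and its illustration $\lambda_{2,3}|_{\zeta(3)}$ (weight $3$ at $n=3$) suffer from the same order/weight mismatch. So you have not departed from the paper here, but the step does not go through under the stated assumptions.
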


\begin{proof}
Part (i) is Theorem~\ref{thm:nogo_static}.
Parts (ii)--(iii) follow from Lemma~\ref{lem:oddzeta} together with the even anchoring
$\nu(\eta)=\ell+\mathcal{O}(\eta^2)$, which prevents linear-in-$\eta$ deformations from
introducing additional weight-one transcendental structures beyond $\zeta_1$.
\end{proof}

This establishes that the vanishing of the static Love number is not merely a consequence of matching, but is enforced by the graded analytic structure of the exterior solution.
\section{Extension to Gravitational Perturbations}
\label{sec:gravity}

We now discuss how the logic developed above extends to the physically relevant case of gravitational perturbations of four-dimensional Schwarzschild black holes. While the technical implementation is more involved due to the tensorial nature of the field and gauge redundancy, the essential structural ingredients --- polynomial truncation in the static sector and a graded analytic structure in the dynamical response --- remain the same.

\subsection*{Static gravitational perturbations}

Linearized gravitational perturbations of Schwarzschild spacetime decompose into axial (odd-parity) and polar (even-parity) sectors, governed respectively by the Regge--Wheeler and Zerilli master equations. In the static limit $\omega=0$, both sectors reduce to second-order ordinary differential equations for gauge-invariant master variables.

At large radius, static solutions exhibit the universal asymptotic behavior
\begin{equation}
\Psi_\ell(r) \sim C_{\rm tidal}\, r^\ell + C_{\rm resp}\, r^{-\ell-1},
\end{equation}
where $C_{\rm tidal}$ corresponds to the externally applied tidal field (encoded in the electric or magnetic components of the Weyl tensor), and $C_{\rm resp}$ encodes the induced multipole moment. The static gravitational Love numbers are defined by the ratio $C_{\rm resp}/C_{\rm tidal}$ in an appropriate normalization scheme.

It is a well-established result in four-dimensional General Relativity that for Schwarzschild black holes this ratio vanishes identically in both parity sectors. Our aim here is not to rederive this result, but to reinterpret it within the structural framework developed above.

\subsection*{Hypergeometric structure and uniqueness}

In the static limit $\omega=0$, the Regge--Wheeler and Zerilli equations reduce to second-order ordinary differential equations with regular singular points at $r=0$, $r=2M$, and $r=\infty$. They therefore belong to the hypergeometric class, and their general solutions may be written as linear combinations of two independent hypergeometric (or equivalently, associated Legendre) branches.

Regularity at the future event horizon $r=2M$ uniquely selects the physical solution. One of the two independent branches is logarithmically singular at the horizon and must be discarded, while the remaining branch is analytic at $r=2M$. For integer multipole index $\ell$, this horizon-regular solution reduces to a polynomial in a suitable dimensionless radial variable, corresponding to an associated Legendre function of the first kind.

The physical static solution is therefore unique up to overall normalization fixed by the applied tidal field. Expanding this unique horizon-regular solution at large radius yields the growing behavior $\sim r^\ell$ together with subleading terms that decay faster than $r^{-\ell-1}$. There is no independent solution proportional to $r^{-\ell-1}$ once the horizon boundary condition is imposed. Consequently,
\begin{equation}
C_{\rm resp}=0,
\end{equation}
and the static gravitational Love numbers vanish in both parity sectors.

\subsection*{Representation-theoretic interpretation}

The polynomial truncation of the static hypergeometric solution admits a natural representation-theoretic interpretation. Hypergeometric differential equations furnish realizations of $\mathrm{SL}(2,\mathbb{R})$ representations, and for integer $\ell$ the regular solution corresponds to a finite-dimensional highest-weight–type module in which the series terminates after a finite number of steps.

Horizon regularity therefore selects a finite-dimensional subspace of the full solution space, excluding the independent branch that would give rise to the $r^{-\ell-1}$ behavior at large radius. In this sense, the vanishing of static gravitational Love numbers is structurally analogous to the scalar case: the boundary condition removes the would-be response mode by enforcing truncation of the hypergeometric representation.

We emphasize that this statement concerns the hypergeometric structure of the static equations rather than the explicit construction of global symmetry generators. The key point is that the same truncation mechanism responsible for the scalar highest-weight structure is present in the gravitational static sector.

\subsection*{EFT interpretation}

From the perspective of worldline effective field theory, the leading even-parity operator takes the schematic form
\begin{equation}
S_{\rm EFT} \supset \frac12\,\lambda^{(g)}_{\ell,0}
\int d\tau\, E_L E^L ,
\end{equation}
where $E_L$ denotes the symmetric trace-free $\ell$-pole tidal tensor and $\lambda^{(g)}_{\ell,0}$ is the static gravitational Love number.

Matching the effective theory to the full gravitational solution requires that the EFT reproduce the space of physical long-wavelength solutions compatible with horizon regularity. Since the full theory admits a unique static solution for each $\ell$ once the horizon boundary condition is imposed, the EFT cannot contain an independent static response parameter. Therefore,
\begin{equation}
\lambda^{(g)}_{\ell,0}=0 .
\end{equation}

The vanishing of static gravitational Love numbers thus follows from the same structural mechanism identified in the scalar case: polynomial truncation of the horizon-regular hypergeometric solution removes the independent decaying mode that would otherwise encode a static response.

\subsection*{Dynamical response and analytic structure}

At nonzero frequency, gravitational perturbations are naturally described using
the Mano--Suzuki--Takasugi (MST) formalism~\cite{Mano:1996vt,Mano:1996mf,Mano:1996gn},
which expresses solutions as series of hypergeometric or Coulomb wavefunctions.
This structure is structurally isomorphic to the Coulomb--hypergeometric basis
employed in the scalar analysis above.
All nontrivial frequency dependence enters through Gamma functions and
connection coefficients involving the dimensionless parameter
\[
\eta = i\omega R_S.
\]

Analytic continuation of the highest-weight static solution into the complex
frequency plane constrains the structure of the low-frequency expansion.
In the limit $\eta \to 0$, the renormalized angular momentum parameter $\nu$
of the MST formalism approaches the integer orbital angular momentum $\ell$,
recovering the polynomial truncation of the static sector.
We now make this anchoring property precise and show that it enforces an
even-power expansion of $\nu$ in $\eta$.

\paragraph*{Even-power anchoring of $\nu(\eta)$.}

For either parity sector, gravitational perturbations satisfy a master equation
of the form
\begin{equation}
\frac{d^2\Psi_{\ell\omega}}{dr_*^2}
+
\big(\omega^2 - V_\ell(r)\big)\Psi_{\ell\omega}
=0,
\end{equation}
which depends on frequency only through $\omega^2$.
In the MST formalism, the series coefficients obey a three-term recurrence
relation whose minimal-solution condition determines $\nu$ via an analytic
constraint
\begin{equation}
\mathcal{F}_\ell(\nu,\eta)=0,
\qquad
\eta=i\omega R_S,
\end{equation}
with $\mathcal{F}_\ell$ analytic near $(\nu,\eta)=(\ell,0)$.

Since the master equation depends only on $\omega^2$, the minimal-solution
condition is invariant under $\omega \to -\omega$, i.e.\ under
$\eta \to -\eta$.
Thus one may choose the analytic representative of the continued-fraction
constraint such that
\begin{equation}
\mathcal{F}_\ell(\nu,\eta)
=
\mathcal{F}_\ell(\nu,-\eta).
\end{equation}

More concretely, although the individual recurrence coefficients
$\alpha_n(\nu,\eta)$, $\beta_n(\nu,\eta)$, and $\gamma_n(\nu,\eta)$
may contain terms linear in $\eta$ depending on normalization conventions,
the full three-term recurrence relation is invariant under
$\omega \to -\omega$.
Since the minimal-solution condition defining $\nu$ depends only on
this recurrence relation and its continued-fraction structure,
the defining constraint $\mathcal{F}_\ell(\nu,\eta)$ inherits the symmetry
under $\eta \to -\eta$.

Assuming the nondegeneracy condition
\begin{equation}
\partial_\nu \mathcal{F}_\ell(\ell,0) \neq 0,
\end{equation}
the implicit function theorem guarantees the existence of a unique analytic
branch $\nu(\eta)$ satisfying $\nu(0)=\ell$ in a neighborhood of $\eta=0$.
Since both $\nu(\eta)$ and $\nu(-\eta)$ solve the same analytic equation
with the same initial condition at $\eta=0$, uniqueness implies
\[
\nu(-\eta)=\nu(\eta),
\]
so the solution is even in $\eta$.
Therefore,
\begin{equation}
\nu(\eta)
=
\ell
+
\sum_{p\ge 1}
\nu_{2p}\,\eta^{2p}
\qquad (\eta\to0).
\end{equation}

This result provides a technically controlled statement of the fact that the
MST analytic continuation is anchored at the static highest-weight solution.
In particular, the shift $\nu-\ell$ begins at order $\eta^2$ and cannot
generate linear-in-$\eta$ corrections.

\medskip

Expanding the MST connection coefficients around $\nu=\ell$ and $\eta=0$
therefore generates logarithms and Riemann zeta values in a highly organized
manner, entirely through the expansion of Gamma and hypergeometric functions
with arguments shifted by $\eta$.
The even-power anchoring of $\nu(\eta)$ ensures that no additional
weight-one structures arise beyond the universal logarithmic element
associated with matching.

Together with dimensional analysis, this implies that the coefficient
multiplying $\omega^n$ in the EFT response function carries total
transcendental weight $n-1$.
Since this graded algebra contains no elements of negative weight (where we
assign weight $+1$ to $\omega$ and $1/R_S$), the static term
($n=0$, weight $-1$) is strictly forbidden.
This provides a structural explanation, within the full spin--2 MST framework,
for the ``zero-sum'' rule observed in scalar Shell EFT and strongly supports
its universality for four-dimensional black hole perturbations.

In this way, horizon regularity selects the highest-weight branch,
the MST continuation preserves even-power anchoring,
and the graded analytic structure of the special functions governing
Schwarzschild wave propagation enforces the absence of a static invariant.

\section{Connection to Gravitational Raman Scattering and Renormalization Group Structure}

In this section we make explicit the connection between the highest–weight
structure of the static near–zone problem and the on–shell renormalization
group (RG) structure uncovered in gravitational Raman scattering
\cite{Ivanov:2024sds}.

\subsection*{Static Highest–Weight Truncation and On–Shell Matching}

The Raman scattering framework computes on–shell amplitudes
for waves scattering off compact objects within worldline
effective field theory (EFT) and matches them to black–hole
perturbation theory.
At third post–Minkowskian order, the EFT phase shifts
contain ultraviolet divergences which are renormalized by
worldline tidal operators.
For a scalar field, the relevant Wilson coefficients are
encoded in the response function
\begin{equation}
F_\ell(\omega) = C_{\ell,0} + C_{\ell,2}\,\omega^2 + \cdots,
\end{equation}
where $C_{\ell,0}$ is the static Love number.

Matching to the full Schwarzschild solution requires
\begin{equation}
C_{\ell,0} = 0
\qquad
\text{(black hole)}.
\end{equation}
This result was established on–shell from scattering amplitudes
in \cite{Ivanov:2024sds}.

We now observe that this vanishing is precisely the EFT
reflection of the highest–weight truncation derived in
Secs.~II–III.
The static near–zone SL(2,$\mathbb{R}$) algebra organizes the
solution space into representations, and horizon regularity
selects a finite–dimensional highest–weight module.
As shown earlier, this implies that the static exterior solution
contains no independent decaying branch at infinity,
and therefore no free response parameter.
Consistency of matching then forces the absence of the
corresponding local operator in the EFT,
\begin{equation}
\lambda_{\ell,0} = 0.
\end{equation}
The on–shell amplitude statement and the representation–
theoretic truncation are therefore equivalent formulations
of the same physical constraint.

\subsection*{Self–Induced Running and Highest–Weight Exclusion}

A distinctive feature of Raman scattering at two loops
is the appearance of a renormalization group equation
of the schematic form \cite{Ivanov:2024sds}
\begin{equation}
\frac{d F_\ell}{d \ln \mu}
=
-(2 G M \omega)^2
\left[
\frac{4\nu_\ell^{(2)}}{\pi} F_\ell
+
8\pi G M\,\delta_{\ell 0}
\right],
\label{eq:RGschematic}
\end{equation}
where $\nu_\ell^{(2)}$ is determined by lower–order phase shifts.
The first term represents \emph{self–induced} running:
if a static coefficient were present, it would feed into
the dynamical response at higher orders.
The second term is universal and independent of the
internal structure of the compact object.

For black holes, the static Love number vanishes,
so the self–induced term proportional to $F_\ell$ cannot
generate a static invariant.
From the EFT viewpoint, this is a renormalization
statement.
From the symmetry viewpoint developed here,
it has a deeper origin.

The highest–weight truncation eliminates the static
descendant that would correspond to an independent
response mode.
Since the SL(2,$\mathbb{R}$) representation terminates,
there exists no state onto which a static Wilson coefficient
could act.
The self–induced running term in
Eq.~(\ref{eq:RGschematic}) therefore has no lowest–weight
source in the black–hole sector.
In this sense, the absence of static running is enforced
not merely by matching, but by representation theory.

\subsection*{Analytic Continuation and Transcendental Structure}

The Raman scattering calculation expresses amplitudes
in a Coulomb–hypergeometric basis and performs a
systematic low–frequency expansion.
All nontrivial frequency dependence arises from Gamma
functions and hypergeometric connection coefficients.
As shown in Sec.~V, analytic continuation of the
highest–weight static solution generates logarithms
and Riemann zeta values through the expansion
\begin{equation}
\log \Gamma(1+u)
=
-\gamma u
+
\sum_{m\ge 2}
\frac{(-1)^m}{m}\,
\zeta(m)\,u^m.
\end{equation}
This is precisely the mechanism responsible for the
graded multi–zeta structure observed in the EFT response.

Thus, the Raman RG structure and the graded algebra
of dynamical Love numbers admit a unified interpretation:
\begin{itemize}
\item Horizon regularity enforces highest–weight truncation.
\item Highest–weight truncation eliminates the static invariant.
\item Analytic continuation of that state controls the
allowed transcendental structure.
\item The EFT renormalization group equations encode this
structure on–shell.
\end{itemize}

In this way, the Raman scattering amplitudes provide
an on–shell confirmation of the symmetry–based
mechanism identified in the static near–zone problem,
while the SL(2,$\mathbb{R}$) structure explains why the
self–induced RG flow cannot generate a static Love number
for four–dimensional black holes.

\section*{Discussion and Conclusions}

In this work we have clarified a coherent perspective on the conceptual origin of the vanishing static Love
numbers of four--dimensional black holes by explicitly connecting three
apparently distinct perspectives: the structure of the static near--zone
perturbation equations, the analytic organization of tidal response coefficients
revealed by Shell Effective Field Theory (Shell EFT), and the on--shell matching
framework based on gravitational Raman scattering.

From the symmetry perspective, we showed explicitly for a massless scalar field
that the static near--zone dynamics on a Schwarzschild background admit an
emergent $\mathrm{SL}(2,\mathbb{R})$ algebraic organization.
Regularity at the future event horizon selects a physical solution belonging to
a highest--weight–type representation.
Equivalently, horizon regularity enforces a truncation of the admissible static
solution space: the physical solution is unique and contains no independent
decaying branch at large radius.
As a result, there exists no freely adjustable static response parameter in the
full theory.
When translated into the language of effective field theory, this exclusion directly implies the absence of the corresponding local tidal operator, forcing
the static Wilson coefficient $\lambda_{\ell,0}$ to vanish.

From the EFT and scattering perspective, gravitational Raman scattering provides
an on--shell and gauge--invariant framework in which tidal operators are
matched directly to black--hole phase shifts.
In this language, the vanishing of the static Love number appears as a necessary
consistency condition of amplitude matching.
We have shown that this result is not merely an on--shell coincidence, but can be understood as the effective imprint of the highest--weight truncation enforced
by horizon regularity.
In particular, the self--induced renormalization group flow identified in Raman
scattering admits a natural interpretation: since no independent static invariant
exists in the black--hole sector, there is no lowest--weight structure onto which
such a term could project.
The absence of static running is therefore consistent with, and naturally explained by,
the underlying representation--theoretic structure. 
From the analytic perspective, Shell EFT expresses the response in a
Coulomb--hypergeometric basis appropriate for long--range scattering.
All nontrivial frequency dependence enters through Gamma and hypergeometric
functions shifted by $\eta=i\omega R_S$.
Analytic continuation of the horizon--regular static solution in $\eta$
therefore yields a highly constrained low--frequency expansion governed by the
properties of these special functions.

We showed that this analytic structure naturally gives rise to a graded algebra
of logarithms and Riemann zeta values organizing the dynamical Love coefficients
$\lambda_{\ell,n}$.
Dimensional analysis together with the structure of the $\eta$ expansion fixes
the total transcendental weight of the coefficient multiplying $\omega^n$ to be
$n-1$, after factoring out the appropriate power of the Schwarzschild radius.
Within this graded algebra, no element of negative weight exists, and the static
coefficient $\lambda_{\ell,0}$ is therefore forbidden.
This reproduces the ``zero--sum'' rule observed in Shell EFT and clarifies
the structure of the renormalization group equations derived from scattering
amplitudes.

Taken together, these results demonstrate that the Shell EFT structure and the
Raman scattering RG flow are not independent or accidental features of the
matching procedure.
Rather, they can be understood as the effective--theory imprint of horizon
regularity and the associated highest--weight structure of the static
black--hole perturbation problem.
The uniqueness of the static solution eliminates the static invariant, while
the Coulomb--hypergeometric realization controls the detailed transcendental
content and scale dependence of the allowed dynamical response. In this sense, the Schwarzschild case suggests an interpretation in which symmetry excludes forbidden operators, analyticity organizes the allowed coefficients, and on--shell matching encodes their renormalization.

Several comments are in order regarding the scope and limitations of our
analysis.
First, the argument relies crucially on the existence of a highest--weight
structure in the static sector, which in turn is tied to horizon regularity and
the absence of additional length scales.
When this property fails --- for example in higher spacetime dimensions, in the
presence of additional mass scales, or for horizonless compact objects ---
static Love numbers need not vanish.
Second, the emergent $\mathrm{SL}(2,\mathbb{R})$ organization is specific to the
strictly static near--zone problem and does not constrain the dynamical response,
which remains generically nonzero.
This explains how black holes can simultaneously exhibit vanishing static Love
numbers and rich frequency--dependent tidal behavior.

Although our detailed analysis focused on a massless scalar field for clarity,
the underlying logic is more general.
We have argued that the same interplay between near--zone structure, horizon
regularity, analytic continuation, and on--shell matching governs gravitational
perturbations of four--dimensional Schwarzschild black holes, and is consistent
with the known vanishing of static gravitational Love numbers. However, it remains an open question to what extent this structure extends to more general spacetimes, such as rotating black holes, where additional effects may arise.
Extending this framework to rotating backgrounds, where additional physical
effects such as superradiance and mode coupling arise, represents a natural and
important direction for future work.

More broadly, our results illustrate how seemingly unexpected EFT constraints
can arise from emergent structural principles operative only in restricted
kinematic regimes.
In the present case, the vanishing of static Love numbers is not an accidental
cancellation, but follows from horizon regularity and the associated
representation--theoretic exclusion.
This perspective may prove useful in identifying and interpreting other
selection rules or protected structures in effective descriptions of
gravitating systems.

\bibliography{Love_Symmetry}
\bibliographystyle{JHEP}

\end{document}